\DeclareMathOperator*{\sign}{sign}
\newcommand{\scp}[3][]{#1\langle #2, #3 #1\rangle}
\newcommand{\rop}{\cl A^{\rm v}}
\newcommand{\ropmtx}{\cl A}
\newcommand{\drop}{\cl B}
\newcommand{\ie}{\emph{i.e.}, }
\newcommand{\eg}{\emph{e.g.}, }
\newcommand{\bs}{\boldsymbol}
\newcommand{\bb}{\mathbb}
\newcommand{\cl}{\mathcal}
\newcommand{\ts}{\textstyle}
\newcommand{\iid}{%
  \ifmmode
  \mathrm{i.i.d.}%
  \else%
  i.i.d.\@\xspace%
  \fi%
}
\newcommand{\ignore}[1]{}
\date{\today}
\newtheorem{thm}{Theorem}
\newtheorem{proposition}[thm]{Proposition}
\newtheorem*{nnproposition}{Proposition}
\newtheorem{corollary}[thm]{Corollary}
\begin{document}
\title{ROP inception: signal estimation\\ with quadratic random sketching}

\author{Rémi Delogne, Vincent Schellekens, and Laurent Jacques%
\thanks{LJ is funded by the Belgian FNRS. Part of this work is funded by the FNRS under Grant n$^\circ$ T.0136.20 (Learn2Sense).}
\vspace{.3cm}\\
ISPGroup, INMA, ICTEAM, UCLouvain, Belgium.
}

\maketitle

\begin{abstract}
Rank-one projections (ROP) of matrices and quadratic random sketching of signals support several data processing and machine learning methods, as well as recent imaging applications, such as phase retrieval or optical processing units. In this paper, we demonstrate how signal estimation can be operated directly through such quadratic sketches---equivalent to the ROPs of the ``lifted signal'' obtained as its outer product with itself---without explicitly reconstructing that signal. Our analysis relies on showing that, up to a minor debiasing trick, the ROP measurement operator satisfies a generalised sign product embedding (SPE) property. 
In a nutshell, the SPE shows that the scalar product of a signal sketch with the \emph{sign} of the sketch of a given pattern approximates the square of the projection of that signal on this pattern. This thus amounts to an insertion (an \emph{inception}) of a ROP model inside a ROP sketch. The effectiveness of our approach is evaluated in several synthetic experiments.
\end{abstract}

\section{Introduction}
\label{sec:intro}

More and more algorithms in signal processing, optimisation, matrix algebra and machine learning rely on \emph{random projections}. They are used to relax the computational burden of ever-growing data flows with data-agnostic dimensionality reduction procedures, such as linear random projections for data embedding or compressive sensing~\cite{JL84}, while preserving specific information. Random projections are also used (in combination with nonlinear maps) to unfold (or embed) a dataset in larger dimensional feature spaces~\cite{saade2016random}. For example, this expanded domain can be more amenable to data separability or to clustering than the initial space, or endowed with specific kernels induced by the projection~\cite{kitchen}.

The \emph{estimation} of specific properties (or functions) of signals is crucial to many data processing techniques, hence the ability to perform it in the feature (``projected'') domain is of strong interest. For instance, one may be interested in processing video streams (\eg for traffic monitoring, industrial quality control or video surveillance) to deduce localised data characteristics, \ie restricted to a given area of the original field-of-view (\eg for traffic density estimation or object detection~\cite{Ka03}). Data processing tools based on random projection should thus comply with this goal, which can be challenging if the area-of-interest is unknown \emph{a priori}, or subject to change. 

In this work, we tackle the question of performing \emph{signal estimation}---here restricted to the estimation of some linear function of a signal---from a projected data stream provided by \emph{quadratic random projections} of signals (see Sec.~\ref{sec:hiddenROP}). This specific data-agnostic projection is related to the sketching\footnote{In this work, the term ``\emph{sketch}'' designates a generic data transformation (feature map) without restriction to the special case of dimensionality reduction; the dimension of a signal sketch can thus be larger that the input signal space dimension.} technique of rank-one projections (ROP)~\cite{chen,ROP}. Our work is motivated by a recent optical machine which performs  ultra-rapid and low-power computations of such quadratic sketches~\cite{saade2016random}. However, with this technology, these computations are performed in a black-box manner, which prohibits us to \emph{explicitly} access the random construction supporting the ROP model inside the localisation procedure---they are thus \emph{hidden} to us (\ie we do not have access to the adjoint ROP operator).      
Our main contribution amounts to showing theoretically that, up to some controlled distortion, signal estimation can be operated directly on quadratic signal sketches without reconstructing the signal (and hence avoiding often costly reconstruction methods~\cite{phaselift,ROP,flavours,chen}). More specifically, the square of this comparison---which is a ROP itself---can be approximated by projecting the sketched signal on the \emph{sign} of the sketched pattern, \ie summarising this last sketch to only the \emph{sign} its components (see Sec.~\ref{sec:sigestinDROP}). We thus achieves a sort of \emph{inception} of a ROP inside a ROP model. This result is thus similar to the techniques pursued in~\cite{SPWCM}, where specific signal processing tasks are shown to be computable directly from compressive measurements. Our theoretical analysis is achieved from a generalisation of the sign product embedding (SPE) property, initially developed in one-bit compressive sensing~\cite{JKD13,flavours}.     

To demonstrate the efficacy of our approach, we consider several synthetic signal estimation scenarios in Sec.~\ref{sec:exper}. A first experiment quantifies the approximation error induced by our approach by considering a pattern that is orthogonal to the signal of interest. Next, we show how we can localise a rotating disk in one of the four quadrants of an image using only the image ROPs. Finally, we propose to classify MNIST handwritten digit images~\cite{mnist} directly in the DROP domain, with accuracy comparable to a direct processing of these images.  

\section{Sketching with hidden rank one projections}
\label{sec:hiddenROP}

In this work, we consider the \emph{quadratic data sketching} mechanism, which consists in taking a series of $m$ quadratic measurements $(\bs a_i^\top\bs x)^2$ of a signal of interest $\bs x\in\bb R^n$, with a set of $m$ random vectors $\{\bs a_i\}_{i=1}^m \subset \bb R^n$. The sketching operator $\rop$ applied to the vector $\bs x$ is thus defined as
\begin{equation}
\label{eq:originalROP}
\rop: \bs x \in \bb R^n \mapsto \rop(\bs x) := \big( (\bs a_i^\top\bs x)^2 \big)_{i=1}^m := \big( (\bs a_1^\top\bs x)^2, \ldots, (\bs a_m^\top\bs x)^2 \big) \in \bb R^m_+.
\end{equation}
We assume that, while the operator $\rop$ can be computed, we cannot explicitly access to the random vectors $\{\bs a_i\}_{i=1}^m$. This restriction is indeed required by a recent optical technology, named Optical Processing Unit (OPU)~\cite{saade2016random}. An OPU allows us to compute all the components of $\rop(\bs x)$ in a reproducible way using the physical properties of multiple scattering of coherent light in random media, which is thus extremely fast and power-efficient (even if $m \simeq n$). In this context, the vectors $\{\bs a_i\}_{i=1}^m$ are fixed, but \emph{hidden} to us. Moreover, following the observations made in~\cite{saade2016random}, we assume that each random vector $\bs a_i$ is \iid as a Gaussian random vector\footnote{This optical projection is actually modelled by quadratic projections over complex random vectors, but we here work in the real field for the sake of simplicity.} $\bs a \sim \cl N(\bs 0, \bs I_n)$, with identity covariance $\bs I_n$.

As observed in the context of phase retrieval~\cite{phaselift}, the operator $\rop$ amounts to a rank-one projection of the \emph{lifted signal}, \ie the rank-one matrix $\bs X = \bs x \bs x^\top \in \bb R^{n \times n}$, onto the rank-one random matrices $\{\bs A_i := \bs a_i \bs a_i^\top\}_{i=1}^m \subset \bb R^{n \times n}$, as defined by the equivalence 
$$
\rop(\bs x)=\big( (\bs a_i^\top\bs x)^2 \big)_{i=1}^m=\big( \bs a_i^\top\bs x\bs x^\top\bs a_i \big)_{i=1}^m=\big( \langle\bs A_i,\bs X\rangle \big)_{i=1}^m =: \ropmtx(\bs X),
$$
where $\langle \cdot, \cdot \rangle$ is the Frobenius inner product~\cite{chen,ROP}. We thus use ``quadratic sketch'' and ``ROP measurements'' interchangeably.   

As the ROP operator is biased---\ie given a signal $\bs x$, there is no constant $c>0$ such that $\frac{c}{m} \bb E \|\rop(\bs x)\|^2$ equals $\|\bs x\|^4 = \|\bs X\|_F^2$---it is useful to introduce the \textit{debiased} ROP operator (DROP)~\cite{chen} 
\begin{equation}
\label{eq:DROP}
\drop: \bs x \in \bb R^{n} \mapsto \drop(\bs x)= \big( \rop_{2i}(\bs x) - \rop_{2i+1}(\bs x) \big)_{i=1}^{m}.
\end{equation}
An OPU can trivially implement this operation by applying finite pixel differences in its focal plane. It can be shown that $\ts\frac{1}{4m} \bb E \|\drop(\bs x)\|^2 = \|\bs x\|^4$~\cite[Lemma 4]{chen}. While the approximation $\ts\frac{1}{4m} \|\drop(\bs x)\|^2 \approx \|\bs x\|^4$ is hard to achieve at reasonable values of $m$---$\drop$ respects the restricted isometry property (RIP) in only a few restrictive settings \cite{chen,ROP}---we leverage below another useful property of $\drop$.

\section{Signal estimation in the DROP domain}
\label{sec:sigestinDROP}

Our objective is to show that we can do (approximate) signal estimation---when this estimation amounts to estimating a linear function of the signal---directly from random quadratic signal sketches, and thus without knowing the observed signal. This is possible by demonstrating that the DROP operator respects, with high probability, the (local) sign product embedding (SPE) property~\cite{JKD13}, whose proof is postponed to App.~\ref{sec:proofs}. This SPE is here instantiated on a signal space $\cl S$ consisting of $k$ sparse signals of $\Sigma_k := \{\bs v \in \bb R^n: |{\rm supp}(\bs v)|\leq k\}$. However, by rotational symmetry of the Gaussian distribution, the proposition below is also valid for sets of $k$ sparse signals in an orthonormal basis $\bs \Psi \in \bb R^{n\times n}$ (such as the wavelet or Fourier bases).
\begin{proposition}
\label{prop:drop-spe}
Given a fixed unit vector $\bs u \in \bb R^n$, $\kappa = \pi/4$, and a distortion $0<\delta <1$, provided that 
\begin{equation}
\label{eq:sample-complex-drop-spe}
\ts \ts m \geq C \delta^{-2} k \log(\frac{n}{k\delta}),    
\end{equation}
then, with probability exceeding $1 - C \exp(-c \delta ^2 m)$, for all $k$-sparse signals $\bs x \in \Sigma_k := \{\bs v \in \bb R^n: |{\rm supp}(\bs v)|\leq k\}$, $\cl B$ respects the SPE over $\Sigma_k$, \ie 
\begin{equation}
\label{eq:drop-spe}
\ts \Big|\frac{\kappa}{m} \scp{\sign(\drop(\bs u))}{\drop(\bs x)} - \scp{\bs u}{\bs x}^2 \Big| \leq \delta \|\bs x\|^2,    
\end{equation}    
with $\sign$ the sign operator applied componentwise on vectors. 
\end{proposition}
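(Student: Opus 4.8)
The plan is to recast the SPE inequality~\eqref{eq:drop-spe} as a \emph{restricted-isometry}-type bound on a single random symmetric matrix. Writing $\bs B_i := \bs a_{2i}\bs a_{2i}^\top - \bs a_{2i+1}\bs a_{2i+1}^\top$ so that $\drop_i(\bs x) = \bs x^\top \bs B_i \bs x$, the quantity $\frac{\kappa}{m}\scp{\sign(\drop(\bs u))}{\drop(\bs x)}$ is the quadratic form $\bs x^\top \bs M \bs x$ with $\bs M := \frac{\kappa}{m}\sum_{i=1}^m \sign(\drop_i(\bs u))\,\bs B_i$. Since $\scp{\bs u}{\bs x}^2 = \bs x^\top \bs u\bs u^\top \bs x$, the claim~\eqref{eq:drop-spe} is exactly $|\bs x^\top(\bs M - \bs u\bs u^\top)\bs x|\le \delta\|\bs x\|^2$ for every $k$-sparse $\bs x$. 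I would therefore aim to show that, with the stated probability, the restricted quadratic form of $\bs M - \bs u\bs u^\top$ over $k$-sparse vectors is at most $\delta$.

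First I would check that the estimator is \emph{unbiased}, i.e. $\bb E\,\bs M = \bs u\bs u^\top$, which is what fixes $\kappa = \pi/4$. By rotational invariance of the Gaussian law I may take $\bs u = \bs e_1$. Decomposing $\bs x = \scp{\bs u}{\bs x}\bs u + \bs x_\perp$ and writing $\alpha = \bs a_{2i}^\top\bs u$, $\beta = \bs a_{2i+1}^\top\bs u$ (independent $\cl N(0,1)$), the components $\bs a_{2i}^\top\bs x_\perp$ and $\bs a_{2i+1}^\top\bs x_\perp$ are independent of $(\alpha,\beta)$ and mean zero; the cross and orthogonal terms in $\drop_i(\bs x) = \scp{\bs u}{\bs x}^2(\alpha^2-\beta^2) + \ldots$ then vanish in $\bb E[\sign(\alpha^2-\beta^2)\,\drop_i(\bs x)]$, leaving $\scp{\bs u}{\bs x}^2\,\bb E|\alpha^2-\beta^2|$. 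The factorisation $\alpha^2-\beta^2 = (\alpha-\beta)(\alpha+\beta)$ into independent $\cl N(0,2)$ variables gives $\bb E|\alpha^2-\beta^2| = 4/\pi$, so $\kappa\,\bb E[\sign(\drop_i(\bs u))\drop_i(\bs x)] = \scp{\bs u}{\bs x}^2$ precisely for $\kappa=\pi/4$; the off-diagonal and orthogonal-diagonal entries of $\bb E[\sign(\drop_i(\bs u))\bs B_i]$ vanish by the same symmetry, confirming $\bb E\,\bs M = \bs u\bs u^\top$.

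Next I would establish pointwise concentration for a fixed unit $\bs x$. The key observation is that $|\sign(\drop_i(\bs u))\,\drop_i(\bs x)| = |\drop_i(\bs x)|$, and $\drop_i(\bs x)$ is a difference of two scaled $\chi^2$ variables, hence sub-exponential with $\psi_1$-norm of order $\|\bs x\|^2$ uniformly in the sign. The summands are independent across $i$ (disjoint vector pairs), so Bernstein's inequality for centered sub-exponential variables yields $\bb P(|\bs x^\top(\bs M-\bs u\bs u^\top)\bs x| > t\|\bs x\|^2) \le 2\exp(-c\,m\min(t^2,t))$, i.e. $2\exp(-c\,m\,\delta^2)$ for the target $t\asymp\delta$.

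Finally I would upgrade this to a uniform bound over $\Sigma_k$. Reducing to the unit sphere, I would cover the sphere of each of the $\binom{n}{k}\le(en/k)^k$ coordinate subspaces by a net of cardinality $(C/\delta)^k$, so that the log of the total net size is $O(k\log\frac{n}{k\delta})$; a union bound against the per-point failure probability then forces $m \gtrsim \delta^{-2}k\log\frac{n}{k\delta}$, matching~\eqref{eq:sample-complex-drop-spe}. To pass from the net to all $k$-sparse vectors I would use that $\bs x\mapsto \bs x^\top(\bs M-\bs u\bs u^\top)\bs x$ is quadratic, controlling its increments on each subspace by the restricted operator norm of $\bs M-\bs u\bs u^\top$ and absorbing the resulting self-referential term with the standard net-resolution argument. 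I expect this last continuity/covering step to be the main obstacle: unlike a Gaussian RIP, here $\drop_i$ is only sub-exponential, so the tail $\exp(-c\,m\min(t^2,t))$ degrades for large $t$, and one must keep the net resolution fine enough (and simultaneously control $\|\bs M-\bs u\bs u^\top\|$ on each subspace) so that the heavier tail still closes the union bound at the claimed sample complexity.
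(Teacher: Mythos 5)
Your proposal is correct and follows essentially the same route as the paper's proof: the identical $\kappa=\pi/4$ unbiasedness computation (reduction to $\bs u=\bs e_1$ by rotational invariance and the factorisation of $\alpha^2-\beta^2$ into independent $\cl N(0,2)$ factors giving $\bb E|\alpha^2-\beta^2|=4/\pi$), the identical fixed-$\bs x$ Bernstein bound $2\exp(-c\,m\min(\delta^2,\delta))$ for sums of \iid sub-exponential terms, and the same union bound over a $\delta$-resolution net of the $k$-sparse sphere (equivalently, of the lifted set $\{\bs x\bs x^\top\}$, as the paper covers it) followed by absorption of a self-referential term. The step you flag as the ``main obstacle'' is exactly where the paper applies the deterministic Cand\`es--Plan maximal-radius argument: on the net event, the supremum $\rho$ of the restricted quadratic form satisfies $\rho\leq\delta+2\rho\epsilon$ (via a rank-two eigendecomposition of the normalised difference matrix, whose two factors lie back in the lifted set), hence $\rho\leq\delta/(1-2\epsilon)$ with $\epsilon=\delta/4$, so no additional probabilistic control of $\|\bs M-\bs u\bs u^\top\|$ and no large-deviation regime of the sub-exponential tail is ever needed --- since $\delta<1$ one has $\min(\delta^2,\delta)=\delta^2$ throughout, and the union bound closes at the claimed sample complexity.
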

This proposition states that, provided that $m$ is large compared to the dimension of the signal space $\cl S$ (here $\cl S = \Sigma_k$ and we thus need $m=\Omega(\delta^{-2} k)$ up to log factors), for any vector $\bs x \in \cl S$, projecting its sketch $\drop(\bs x)$ on $\sign(\cl B(\bs u)) \in \{\pm 1\}^m$ is a proxy for $\scp{\bs u}{\bs x}^2$---a ROP of $\bs x\bs x^\top$ by $\bs u \bs u^\top$. The spirit of this result is thus similar to the approach of~\cite{SPWCM} in linear compressive sensing~\cite{SPWCM}, where a set of signal processing techniques (including signal estimation) are proved to be applicable in the compressed signal measurements. Let us emphasise that the error of the approximation~\eqref{eq:drop-spe} is bounded by $\delta \|\bs x\|^2$, that is $O(\sqrt{k/m}\,\|\bs x\|^2)$ (up to log factors) by saturating~\eqref{eq:sample-complex-drop-spe}. Therefore, the equation~\eqref{eq:drop-spe} is useful only if $\scp{\bs u}{\bs x/\|\bs x\|}^2$ is sufficiently large compared to $\delta$. Incidentally, this allows us to estimate $\delta$ as a function of $m$ by taking a vector $\bs u$ orthogonal to $\bs x$ (see Sec.~\ref{sec:exper}). Notice that, by a simple union bound argument, Eq.~\ref{eq:drop-spe} in Prop.~\ref{prop:drop-spe} can be shown to hold with the same probability bound for all $\bs u$ in a finite set of $S$ unit vectors provided $m \geq C \delta^{-2} (k \log(\frac{n}{k\delta}) + \log S)$ (see Cor.~\ref{cor:drop-spe-set} in App.~\ref{sec:proofs}).

\section{Experiments}
\label{sec:exper}

\begin{figure}[t]
\centering
\includegraphics[width=0.45\textwidth]{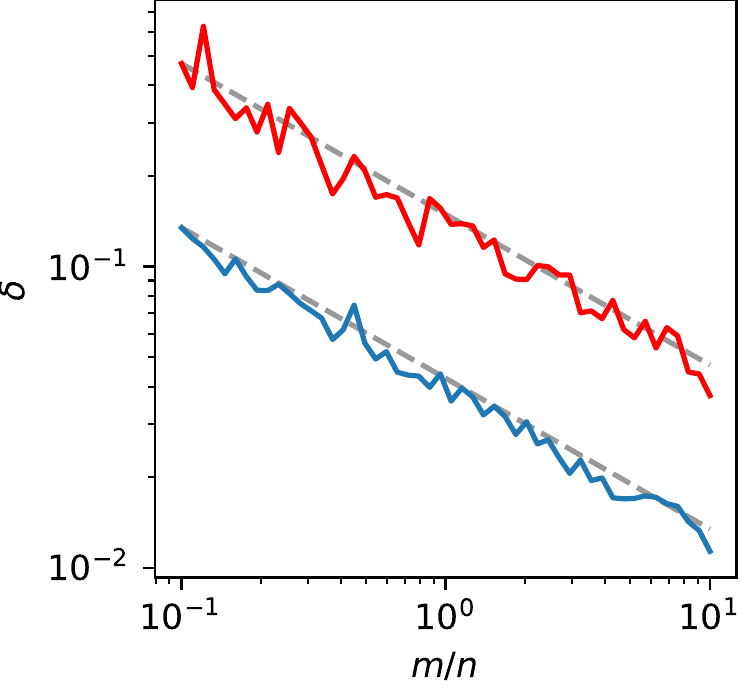}\qquad 
\raisebox{10mm}{\includegraphics[width=0.35\textwidth]{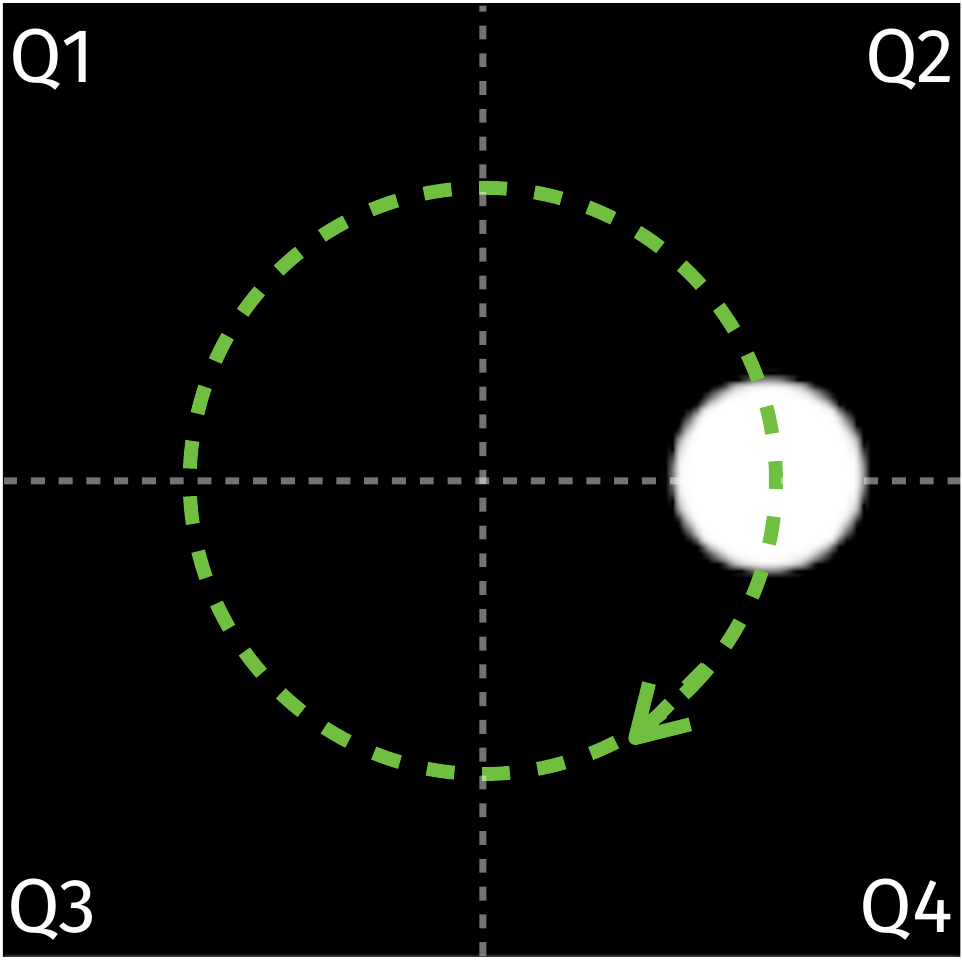}}\\
\includegraphics[width=0.7\textwidth]{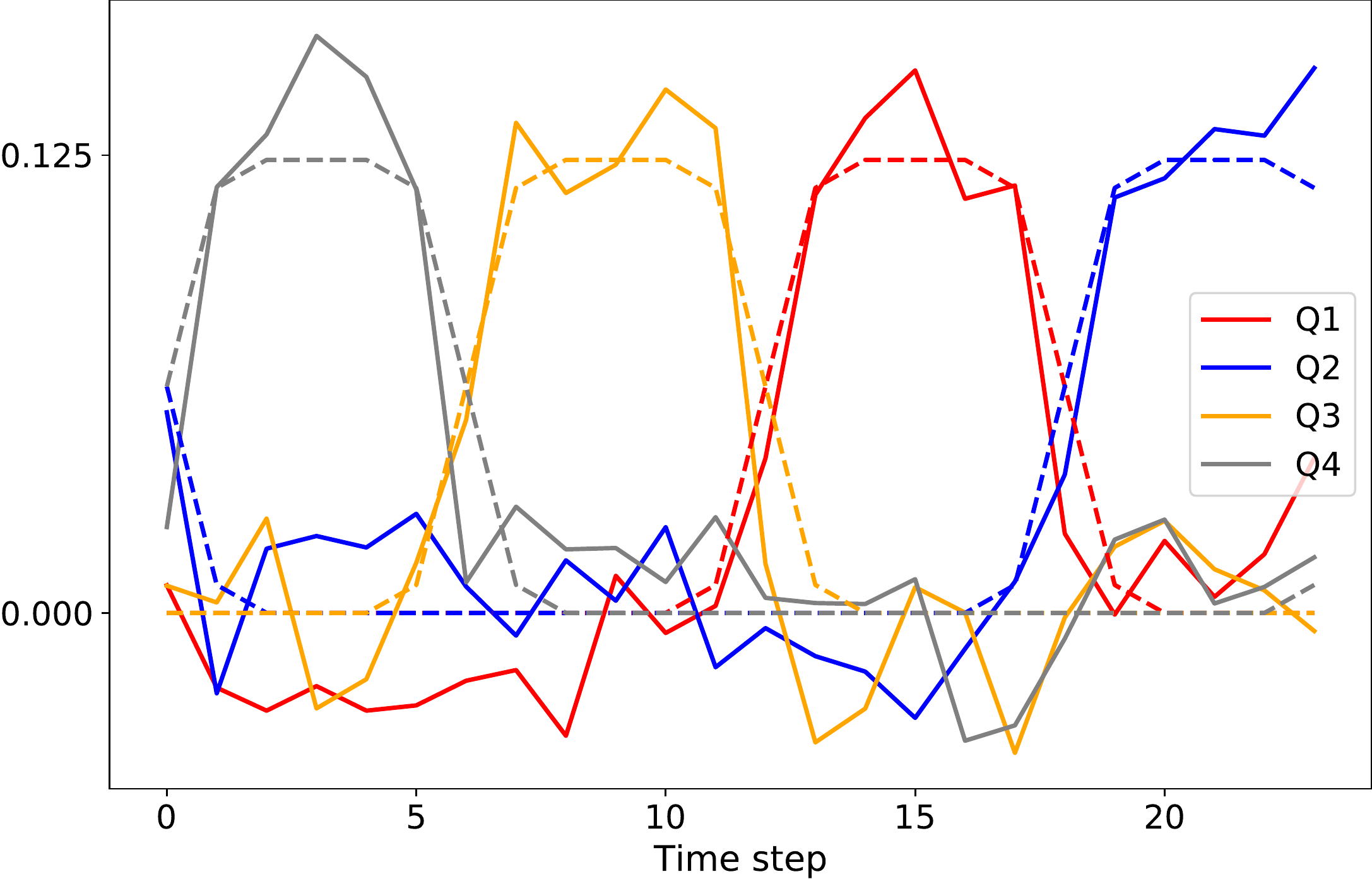}
\caption{(Top-left) Estimation of $\delta$ in~\eqref{eq:drop-spe}. (Top-right) Synthetic dynamic image sequence; a rotating white disk on a black background. (Bottom) The estimation of the quadrant occupancy functions per quadrant. 
}
\label{fig:delta-and-others}
\end{figure}

This section contains three experiments demonstrating how one can directly carry out signal estimation in the range of the DROP operator. Our first experiment aims at measuring the approximation error ($\delta$) in~\eqref{eq:drop-spe}. For this purpose, we took two unit vectors $\bs x, \bs u \in \bb R^n$ picked uniformly at random on the sphere $\bb S^{n-1}$ with the additional constraint that $\bs u$ must be orthogonal to $\bs x$. In this case, the term $\scp{\bs u}{\bs x}^2$ vanishes in~\eqref{eq:drop-spe} and one can estimate $\delta$ by Monte Carlo simulations over several trials of $\frac{\kappa}{m} \scp{\sign(\drop(\bs u))}{\drop(\bs x)}$. We show in Fig.~\ref{fig:delta-and-others}(top-left) the estimated $\delta$ over $100$ trials (average and maximum value) for $n=1000$ and  varying ratio $m/n \in [0.1,10]$. This estimation confirms that $\delta$ decays as $O(m^{-1/2})$ when $m$ increases. Moreover, on average, $\delta \simeq 0.1$ as soon as $m/n > 0.2$.         

As a second experiment, we test the possibility to perform localised detection in a simple synthetic video; a sequence of 24 images $\{\bs x_t\}_{t=0}^{23}$ of size $128\times 128$ ($n=16\,384$) representing a white disk rotating on a black (zero) background (see Fig.~\ref{fig:delta-and-others}(top-right)). Our objective is to detect the passage of the disk in each of the four image quadrants by only processing the DROP measurements $\{\drop(\bs x_t)\}_{t=0}^{23} \subset \bb R^m$. We have thus created 4 normalised patterns $\{\bs u_j\}_{j=1}^4$, with $\bs u_j$ being constant in the $j$-th quadrant and 0 outside, for each $1\leq j\leq 4$. We show in Fig.~\ref{fig:delta-and-others}(bottom) the time evolution of $q^{\rm est}_{j}(t) := \frac{\kappa}{m} \scp{\sign(\drop(\bs u_j))}{\drop(\bs x_t)}$ (continuous curves) for $m/n=0.5$ in comparison with the \emph{quadrant occupancy functions} $q_{j}(t) := \scp{\bs u_j}{\bs x_t}^2$ (dashed lines). The colour coding is given in the legend. We observe that each curves $q^{\rm est}_{j}(t)$ provides a fair estimation of $q_{j}(t)$. Moreover, the amplitude of each curve $q^{\rm est}_{j}(t)$ reduces when the disk is not in the associated quadrant. A rough detection of the quadrant occupancy could thus be established by appropriately thresholding each time signal $q^{\rm est}_{j}(t)$.

As final illustration, we perform a toy example (naive) classification of images taken in the labelled MNIST dataset $\cl X := \{(\bs x_k, t_k)\}_{k=1}^{N=70\,000} \subset \bb R^{28\times 28} \times \{0,\ldots,9\}$~\cite{mnist} (with normalisation $\|\bs x_k\|=1$). We want to compare the classification performed in the direct, pixel domain, to the one operated in the sketched domain. We start by randomly splitting $\cl X$ into a training $\cl X_{\rm tr}$ and a test set $\cl X_{\rm te}$ according to a split of $60\,000$ and $10\,000$ images, respectively. From $\cl X_{\rm tr}$ we compute the 10 centroids $\{\bs c_j\}_{j=0}^9 \subset \bb R^{28 \times 28}$ of each class of digits and we define 10 vectors $\bs u_j = \bs c_j /\|\bs c_j\|$. In the direct domain, the estimated label of an image $\bs x_k$ is then defined as $\hat{t}_k := \arg\max_j \scp{\bs u_j}{\bs x_k}^2$. In the sketched domain, the label estimate of a test image $\bs x_k$ is computed as $\hat{t}^{\rm sk}_k := \arg\max_j \frac{\kappa}{m} \scp{\sign(\drop(\bs u_j))}{\drop(\bs x_k)}$, which should be close to $\hat{t}_k$ according to Prop.~\ref{prop:drop-spe}. We reach the following average testing accuracy (100 trials, $\pm$ standard deviation in $\%$) for both the direct and the sketched classifications:
\begin{center}
\begin{tabular}{ |c|c|c|c|c|c| } 
 \hline
  & Direct & $m=200$ 
  & $m=400$ 
  & $m=800$ & $m=1600$\\ \hline
 Accuracy  $[\%]$ & $81.2$ & $69.3 \pm 1.9$
 & $75.2\pm 1.25$ 
 & $78.6\pm 0.91$&$80.4 \pm 0.79$\\ 
 \hline
\end{tabular}
\end{center}
These results confirm that, up to some distortion, one can apply this naive classification procedure directly in the DROP measurements of the dataset. As predicted by our analysis, this approximation improves when $m$ increases. 

To prospectively question the possibility to reach better classification accuracy in the DROP domain, we consider the dataset $\cl X^{\drop}= \{(\drop(\bs x_k), t_k)\}_{k=1}^{N=70\,000}$,  and compute the centroids $\{\bs c^{\drop}_j\}_{j=0}^9 \subset \bb R^m$ in the training split of $\cl X^{\drop}$. We then compare two possible classification methods. The first proceeds as a direct classification in the DROP domain, \ie we compute the label estimate $\hat{t}^{\drop}_k := \arg\max_j \scp{\bs c^{\drop}_j}{\drop(\bs x_k)}$. The second approach assumes that, for each centroid, there exists a vector $\bs v_j$ such that $\bs c^{\drop}_j \approx \drop(\bs v_j)$. Under this assumption, we rather estimate our label by applying a \emph{sign} operation to each $\bs c^{\drop}_j$---which is more aligned to Prop.~\ref{prop:drop-spe}--- and we compute $\hat{t}^{\drop, \sign}_k := \arg\max_j \scp{\sign(\bs c^{\drop}_j)}{\drop(\bs x_k)}$. We reach the following average testing accuracy for the two approaches (for $m=800$, 500 trials, $\pm$ standard deviation in \%):
\begin{center}
\begin{tabular}{ |c|c|c|c|c|c| } 
 \hline
  & Direct & Estimate $\hat{t}^{\rm sk}_k$ & Estimate $\hat{t}^{\cl B}_k$ & Estimate $\hat{t}^{\cl B, \sign}_k$\\ \hline
 Accuracy $[\%]$ & $81.2$ & $78.5 \pm 0.7$ & $63.4 \pm 3.3$  & $84.2 \pm 0.6$ \\ 
 \hline
\end{tabular}
\end{center}
Keeping only the sign of the estimated centroids provides better testing accuracy than the unsigned method. In fact, the signed approach outperforms the testing accuracy of the direct approach by about 3\%. 

\section{Conclusion and perspectives}

Our developments have shown that signal estimation is possible by directly processing the quadratic measurements of a signal. We achieved this by showing that such ROP measurements satisfy the sign product embedding with high probability provided $m$ is sufficiently large compared to the signal space dimension. Our theoretical results were backed up by several synthetic signal estimation and classification experiments. In future works, we plan to reproduce this experiment on an actual OPU to further reduce the computational cost of the procedure, by also leveraging the one-bit nature of the signed sketches. 

\appendix

\section{Proofs}
\label{sec:proofs}

In this appendix, we start by providing a proof of the isotropy of the DROP operator $\drop$. A more general proof is provided in~\cite[Lemma 4]{chen} (where $\drop$ is extended to the mapping of matrices of any rank and for sub-gaussian random vectors $\bs a_i$), but we find useful to provide below a short alternate proof for the sake of self-containedness. 

\begin{proposition}[DROP isotropy]
\label{prop:unbiased-drop}
Given $\bs x \in \bb R^n$, we have
\begin{equation}
\label{eq:unbiased-drop}
\ts\frac{1}{m} \bb E \|\drop(\bs x)\|^2 = 4\|\bs x\|^4.    
\end{equation}    
\end{proposition}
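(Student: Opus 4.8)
The plan is to exploit the identical distribution of the $m$ entries of $\drop(\bs x)$ and reduce the claim to a single second-moment computation for one debiased measurement. Writing the $i$-th entry as $d_i := \rop_{2i}(\bs x) - \rop_{2i+1}(\bs x) = (\bs a_{2i}^\top\bs x)^2 - (\bs a_{2i+1}^\top\bs x)^2$, linearity of expectation gives $\bb E\|\drop(\bs x)\|^2 = \sum_{i=1}^m \bb E[d_i^2]$. Since every $\bs a_j$ is \iid as $\cl N(\bs 0, \bs I_n)$, the $d_i$ all share a common law, so $\bb E\|\drop(\bs x)\|^2 = m\,\bb E[d^2]$ for a single representative $d = u^2 - v^2$, where $u := \bs a^\top\bs x$ and $v := \bs b^\top\bs x$ with $\bs a, \bs b$ two independent standard Gaussian vectors. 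It then suffices to show $\bb E[d^2] = 4\|\bs x\|^4$, which upon dividing by $m$ reproduces exactly \eqref{eq:unbiased-drop}.

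To evaluate $\bb E[d^2]$, I would first observe that $u$ and $v$ are independent one-dimensional centred Gaussians with variance $\bb E[(\bs a^\top\bs x)^2] = \bs x^\top\bb E[\bs a\bs a^\top]\bs x = \|\bs x\|^2$, i.e. $u, v \sim \cl N(0, \|\bs x\|^2)$. Expanding $d^2 = u^4 - 2u^2v^2 + v^4$ and using the independence of $u$ and $v$ to factor the cross term, $\bb E[d^2] = \bb E[u^4] - 2\,\bb E[u^2]\,\bb E[v^2] + \bb E[v^4]$. Invoking the standard Gaussian moments $\bb E[u^2] = \|\bs x\|^2$ and $\bb E[u^4] = 3\|\bs x\|^4$ (and identically for $v$) then yields $\bb E[d^2] = 3\|\bs x\|^4 - 2\|\bs x\|^4 + 3\|\bs x\|^4 = 4\|\bs x\|^4$, completing the argument.

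I do not expect a substantive obstacle here; the computation is short and the only points requiring care are the fourth-moment identity $\bb E[u^4] = 3\|\bs x\|^4$ and the fact that $\bs a_{2i}$ and $\bs a_{2i+1}$ are independent, so that $\bb E[u^2v^2]$ splits as $\bb E[u^2]\,\bb E[v^2]$. The latter independence is exactly what the pairing in the definition \eqref{eq:DROP} is designed to provide. A minor bookkeeping remark is that the index pairs $(2i, 2i+1)_{i=1}^m$ are disjoint, so the $d_i$ are in fact mutually independent as well; this is not needed for the expectation (linearity alone suffices), but it is worth noting in passing since it underlies the concentration statements used later in Proposition~\ref{prop:drop-spe}.
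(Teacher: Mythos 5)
Your proof is correct and takes essentially the same route as the paper's: both reduce the claim to the second moment of a single debiased measurement $(u^2-v^2)^2$ with $u,v$ independent centred Gaussians, and conclude from the fourth-moment identity $\bb E[u^4]=3(\bb E[u^2])^2$ plus independence of the cross term. The only cosmetic difference is that the paper first normalises to $\bs x=\bs e_1$ via rotational symmetry and homogeneity, whereas you carry $\|\bs x\|^2$ through directly as the variance of $\bs a^\top\bs x$.
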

\begin{proof}
By rotational symmetry of the Gaussian distribution and by homogeneity of~\eqref{eq:unbiased-drop} in $\|\bs x\|^4$, it is enough to prove Prop.~\ref{prop:unbiased-drop} for $\bs x = \bs e_1 := (1,0,\ldots,0)^\top$. In this case, for two independent random vectors $\bs a, \bs b \sim \cl N(\bs 0, \bs I_n)$, we easily show that $\frac{1}{m}\bb E\|\drop(\bs x)\|^2 = \bb E \big[(\bs a^\top \bs x)^2-(\bs b^\top \bs x)^2\big]^2 = \bb E \big[a_1^2- b_1^2]^2  2 (\bb E a_1^4 - \bb E a_1^2 b_1^2) = 4 = 4 \|\bs x\|^4$.
\end{proof}

We now prove the central result of this work, Prop.~\ref{prop:drop-spe}, which is reproduced below for convenience.

\begin{nnproposition}[\ref{prop:drop-spe}]
Given a fixed unit vector $\bs u \in \bb R^n$, $\kappa = \pi/4$, and a distortion $0<\delta <1$, provided that 
\begin{equation}
\ts m \geq C \delta^{-2} k \log(\frac{n}{k\delta}), \tag{\ref{eq:sample-complex-drop-spe}}   
\end{equation}
then, with probability exceeding $1 - C \exp(-c \delta ^2 m)$, for all $k$-sparse signals $\bs x \in \Sigma_k := \{\bs v \in \bb R^n: |{\rm supp}(\bs v)|\leq k\}$, $\cl B$ respects the SPE over $\Sigma_k$, \ie 
\begin{equation}
\ts \Big|\frac{\kappa}{m} \scp{\sign(\drop(\bs u))}{\drop(\bs x)} - \scp{\bs u}{\bs x}^2 \Big| \leq \delta \|\bs x\|^2, \tag{\ref{eq:drop-spe}}   
\end{equation}    
with $\sign$ the sign operator applied componentwise on vectors. 
\end{nnproposition}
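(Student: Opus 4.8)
The plan is to prove the SPE pointwise for each fixed $\bs x$, via an exact expectation computation followed by a sub-exponential concentration bound, and then to lift it uniformly over $\Sigma_k$ by a covering argument. First, since both sides of~\eqref{eq:drop-spe} are homogeneous of degree $\|\bs x\|^2$, I would normalise $\|\bs x\|=1$. Writing the $i$-th debiased component as $\drop_i(\bs v)=(\bs a_i^\top\bs v)^2-(\bs b_i^\top\bs v)^2$ for two independent $\bs a_i,\bs b_i\sim\cl N(\bs 0,\bs I_n)$, the summand $Z_i:=\kappa\,\sign(\drop_i(\bs u))\,\drop_i(\bs x)$ is \iid across $i$, so it suffices to compute its mean and its tail. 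For the mean, I would decompose $\bs x=r\bs u+s\bs x_\perp$ with $r=\scp{\bs u}{\bs x}$, $\bs x_\perp\perp\bs u$ a unit vector, and $s=\sqrt{1-r^2}$, so that $\bs a_i^\top\bs x=rg+sg'$ and $\bs b_i^\top\bs x=rh+sh'$ where $g=\bs a_i^\top\bs u$, $h=\bs b_i^\top\bs u$, and $g,h,g',h'$ are independent standard Gaussians. Expanding $\drop_i(\bs x)=r^2(g^2-h^2)+2rs(gg'-hh')+s^2(g'^2-h'^2)$ and taking the expectation against $\sign(\drop_i(\bs u))=\sign(g^2-h^2)$, the cross terms in $g',h'$ vanish by independence and zero mean, leaving only $r^2\,\bb E|g^2-h^2|$. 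Rotating to $p=(g-h)/\sqrt2$, $q=(g+h)/\sqrt2$ gives $|g^2-h^2|=2|p||q|$, hence $\bb E|g^2-h^2|=2(\bb E|p|)^2=4/\pi$, so $\kappa\,\bb E Z_i=\tfrac{\pi}{4}\cdot\tfrac{4}{\pi}\,r^2=\scp{\bs u}{\bs x}^2$. This is precisely what fixes the constant $\kappa=\pi/4$.

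For the tail, $|Z_i|\leq\kappa|\drop_i(\bs x)|$ is (a multiple of) a difference of squared standard Gaussians, hence sub-exponential with $O(1)$ norm under the normalisation $\|\bs x\|=1$. Bernstein's inequality then yields, for each fixed $\bs x$, that $\big|\tfrac1m\sum_i Z_i-\scp{\bs u}{\bs x}^2\big|\leq\delta$ fails with probability at most $C\exp(-c\delta^2 m)$; the quadratic regime of Bernstein is the active one because $\delta<1$, which is what produces the $\delta^2$ in both the sample complexity~\eqref{eq:sample-complex-drop-spe} and the probability exponent.

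To make the bound uniform over $\Sigma_k\cap\bb S^{n-1}$, I would cover this set by an $\epsilon$-net $\cl N$ with $\log|\cl N|\lesssim k\log(\tfrac{n}{k\epsilon})$, apply the pointwise bound together with a union bound over $\cl N$ (which forces $m\gtrsim\delta^{-2}k\log(\tfrac{n}{k\epsilon})$), and then control the deviation off the net. Writing $f(\bs x)$ for the left-hand side of~\eqref{eq:drop-spe}, the term $\scp{\bs u}{\bs x}^2$ is trivially $2$-Lipschitz on the unit sphere, while for the quadratic term I would use $|\sign(\cdot)|\leq1$ and factor $\drop_i(\bs x)-\drop_i(\bs x_0)=(\bs a_i^\top\bs d)(\bs a_i^\top\bs s)-(\bs b_i^\top\bs d)(\bs b_i^\top\bs s)$ with $\bs d=\bs x-\bs x_0$ and $\bs s=\bs x+\bs x_0$, both in $\Sigma_{2k}$. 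Cauchy--Schwarz then bounds $\tfrac1m\sum_i|\drop_i(\bs x)-\drop_i(\bs x_0)|$ by $\sqrt{\tfrac1m\sum_i(\bs a_i^\top\bs d)^2}\,\sqrt{\tfrac1m\sum_i(\bs a_i^\top\bs s)^2}$ (plus the analogous $\bs b$-term), which the RIP of the underlying \emph{linear} Gaussian map $[\bs a_i]$ on $\Sigma_{2k}$---valid for $m\gtrsim k\log(n/k)$, subsumed by~\eqref{eq:sample-complex-drop-spe}---controls by $\lesssim\|\bs d\|\,\|\bs s\|\lesssim\|\bs x-\bs x_0\|$. Choosing $\epsilon\asymp\delta$ then absorbs the net error into the target distortion.

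The main obstacle is this last uniformity step. The exact expectation and the pointwise Bernstein bound are essentially routine, but passing to a statement uniform over the whole sparse sphere demands a Lipschitz-type increment bound for the quadratic functional $\bs x\mapsto\scp{\sign(\drop(\bs u))}{\drop(\bs x)}$ that holds simultaneously for all net pairs. Since $\drop$ itself is not a well-behaved isometry on $\Sigma_k$ (indeed the excerpt notes it fails the RIP in general), the clean route is to control the increment through the RIP of the underlying linear Gaussian operator on $\Sigma_{2k}$ via Cauchy--Schwarz, rather than through any isometry of $\drop$; some care is then needed to ensure that the failure probabilities of the pointwise concentration event and of the RIP event combine into the single stated bound $C\exp(-c\delta^2 m)$.
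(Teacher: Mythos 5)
Your proposal is correct, and its first two stages (the exact mean computation fixing $\kappa=\pi/4$, and the pointwise sub-exponential/Bernstein concentration giving failure probability $C\exp(-c\delta^2 m)$ for $\delta<1$) coincide with the paper's proof almost line by line. Where you genuinely diverge is the uniformity step. The paper lifts the problem to the set $\cl G=\{\bs x\bs x^\top:\bs x\in\Sigma_k\cap\bb S^{n-1}\}$ of bi-sparse rank-one matrices, covers $\cl G$ by a net of cardinality $\lesssim\binom{n}{k}(9/\epsilon)^{2k+1}$, and then controls the off-net error by a Cand\`es--Plan-style bootstrap: writing the deviation as $\scp{\bs Q}{\bs X}$ for a single random matrix $\bs Q$, setting $\rho:=\sup_{\bs X\in\cl G}|\scp{\bs Q}{\bs X}|$, and decomposing the normalised rank-two difference $\bs Y=(\bs X-\bs X')/\|\bs X-\bs X'\|$ into two orthogonal rank-one pieces lying (up to sign) in $\cl G$, which yields the self-bounding inequality $\rho\leq\delta+2\rho\epsilon$ and hence $\rho\leq\delta/(1-2\epsilon)$; crucially, this requires \emph{no additional random event} beyond the union bound over the net. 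Your route instead stays in the signal domain and controls the increment $\drop_i(\bs x)-\drop_i(\bs x_0)=(\bs a_i^\top\bs d)(\bs a_i^\top\bs s)-(\bs b_i^\top\bs d)(\bs b_i^\top\bs s)$ via Cauchy--Schwarz and the RIP of the \emph{linear} Gaussian maps $[\bs a_i]$, $[\bs b_i]$ on $\Sigma_{2k}$. This is sound: a constant RIP level suffices, it holds with probability $1-O(\exp(-cm))$ once $m\gtrsim k\log(n/k)$, which is subsumed by~\eqref{eq:sample-complex-drop-spe}, and since $\delta<1$ the extra failure probability folds into $C\exp(-c\delta^2 m)$, resolving the concern you flag at the end. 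The trade-off: your argument is more modular (a deterministic Lipschitz-type bound on a standard RIP event, no rank-two decomposition needed), at the price of invoking one extra concentration result; the paper's bootstrap is fully self-contained on the net event but leans on the algebraic structure of the lifted set. Both yield the same sample complexity and probability bound, up to constants.
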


\begin{proof}

Let us first show that the expectation of $\frac{\kappa}{m} \scp{\sign(\drop(\bs u))}{\drop(\bs x)}$ is actually equal to $\scp{\bs u}{\bs x}^2$. 

By rotational symmetry of the Gaussian distribution and homogeneity of this expectation in $\|\bs x\|^2$, it is enough to prove it for $\bs u=\bs e_1$ and $\bs x = c \bs e_1 + s \bs e_2$, with $c = \cos\theta$, $s=\sin\theta$, and $\theta$ the angle between $\bs x$ and $\bs u$. 

In this case, for two independent random vectors $\bs a, \bs b \sim \cl N(\bs 0, \bs I_n)$, we easily show that \begin{align*}
\ts \frac{1}{m} \bb E \scp{\sign(\drop(\bs u))}{\drop(\bs x)}&= \bb E \sign(a_1^2 - b_1^2)[(c a_1 + s a_2)^2 - (c b_1 + s b_2)^2]\\
&= \bb E \sign(a_1^2 - b_1^2)[c^2(a_1^2 - b_1^2)+s^2(a_2^2 - b_2^2)+2cs (a_1 a_2 - b_1b_2)]\\
&= c^2 \bb E \sign(a_1^2 - b_1^2)(a_1^2 - b_1^2) = c^2 \bb E |a_1 - b_1||a_1 + b_1|.    
\end{align*}
Since $a_1 - b_1 \sim \cl N(0,2)$ and $a_1 + b_1 \sim \cl N(0,2)$ are decorrelated, and thus independent, we have $\bb E |a_1 - b_1|||a_1 + b_1| = 2 \bb E(2^{-1/2} |a_1 - b_1|)\, \bb E(2^{-1/2} |a_1 + b_1|) = \frac{4}{\pi} = 1/\kappa$, which proves the claim since $c^2 = \cos^2 \theta = \scp{\bs u}{\bs x}^2$. 

\medskip
Second, we study the concentration of $\frac{\kappa}{m} \scp{\sign(\drop(\bs u))}{\drop(\bs x)}$ around its mean for a fixed unit vector $\bs u$ and still assuming that $\|\bs x\|^2=1$ by homogeneity of~\eqref{eq:drop-spe}. One can write 
$$
S := \kappa \scp{\sign(\drop(\bs u))}{\drop(\bs x)} - m \scp{\bs u}{\bs x}^2,
$$ 
as the sum $S = \sum_{i=1}^m Z_i$ with 
$$
Z_i := s_i(\bs u)[(\bs a_{2i}^\top \bs x)^2 - (\bs a_{2i+1}^\top \bs x)^2],
$$ 
and $s_i(\bs u) = \sign[(\bs a_{2i}^\top \bs u)^2 - (\bs a_{2i+1}^\top \bs u)^2]$. The random variables $Z_i$ are all \iid and sub-exponential since their sub-exponential norm is bounded as 
$$
\|Z_i\|_{\psi_1} = \|(\bs a_{2i}^\top \bs x)^2 - (\bs a_{2i+1}^\top \bs x)^2\|_{\psi_1} \leq 2 \|(\bs a_{2i}^\top \bs x)^2\|_{\psi_1} \leq 4 \|\bs a_{2i}^\top \bs x\|^2_{\psi_2} \leq C,$$
see~\cite[Def. 5.13 \& Lemma 5.14]{versh12}. 

Therefore, from~\cite[Cor. 5.17]{versh12}, for any $\delta\geq 0$, we get the concentration result
$\bb P\big[ |S| \geq \delta m \big] \leq 2 \exp(-c \min(\delta^2, \delta) m)$, or 
\begin{equation}
\label{eq:conc-pre-spe}
\ts \bb P\big[ \big|\frac{\kappa}{m} \scp{\sign(\drop(\bs u))}{\drop(\bs x)} - \scp{\bs u}{\bs x}^2 \big| \geq \delta \big] \leq 2 \exp(-c \min(\delta^2, \delta) m). 
\end{equation}

\medskip
Finally, we extend the previous concentration result for a fixed unit vector $\bs u$ and for all vectors of $\Sigma_k$. By the same homogeneity argument, it is enough to prove the final claim of Prop.~\ref{prop:drop-spe} for all unit $k$-sparse signals $\bs x$ of $\Sigma_k \cap \bb S^{n-1}$. 

Let us observe that, for all $1 \leq i \leq m$,
$$
\ts \drop(\bs x) = \bs d_i^\top \bs X \bs s_i,\quad\text{with}\ \bs d_i := \bs a_{2i} - \bs a_{2i+1}, \bs s_i := \bs a_{2i} + \bs a_{2i+1},
$$
and $\bs X = \bs x \bs x^\top$. Moreover, the lifted matrix $\bs X$ has rank one, unit Frobenius norm and is sparse along its rows and columns, or \emph{bi-sparse}~\cite{fou20}. 

Therefore, given a radius $\epsilon >0$ to be fixed momentarily, there exists a $\epsilon$-covering $\cl G_\epsilon \subset \cl G$ of the lifted set $\cl G := \{\bs x \bs x^\top : \bs x \in \Sigma_k \cap \bb S^{n-1}\}$ such that for any $\bs X \in \cl G$ there is a $\bs X' \in \cl G_\epsilon$ with same row and column supports and $\|\bs X - \bs X\|\leq \epsilon$. Moreover, its cardinality $|\cl G_\epsilon|$ does not exceed $\binom{n}{k}(9/\epsilon)^{2k +1}$~\cite[Lemma 3.1]{canplan11}, which is crudely bounded by $(\frac{c n}{\epsilon k})^{2k+1}$. 

Fixing $\bar{\bs u} := \sign(\drop(\bs u))$, since 
$$
\ts \frac{\kappa}{m} \scp{\bar{\bs u}}{\drop(\bs x)} - \scp{\bs u}{\bs x}^2
= \frac{\kappa}{m} \scp{\bar{\bs u}}{\bs d_i^\top \bs X \bs s_i} - \bs u^\top \bs X \bs u,
$$
by a standard union bound argument applied to~\eqref{eq:conc-pre-spe}, we thus see that, with probability exceeding $1 - 2 |\cl G_\epsilon| \exp(-c \min(\delta^2, \delta) m)$, we have for all $\bs X' \in \cl G_\epsilon$ 
$$
\ts |\frac{\kappa}{m} \scp{\bar{\bs u}}{\bs d_i^\top \bs X' \bs s_i} - \bs u^\top \bs X' \bs u| \leq \delta.
$$
Therefore, provided $m \geq C \min(\delta,\delta^2)^{-1} k \log(\frac{n}{k\epsilon})$, this last inequality holds with probability greater than  $1 - 2 \exp(-c \min(\delta^2, \delta) m)$.

The rest of the proof assumes that this event holds. Let us define the $n \times n$ matrix 
$$
\ts \bs Q := \frac{\kappa}{m} \sum_{i=1}^m \bar{u}_i \bs d_i \bs s_i^\top - \bs u \bs u^\top.
$$
We then have for all $\bs X \in \cl G$, 
$$
\ts \frac{\kappa}{m} \scp{\bar{\bs u}}{\bs d_i^\top \bs X \bs s_i} - \bs u^\top \bs X \bs u = \scp{\bs Q}{\bs X}.
$$
Moreover, defining the maximal radius $\rho := \sup_{\bs X \in \cl G} |\scp{\bs Q}{\bs X}|$ and following a similar argument to~\cite[Sec. 3.2]{canplan11}, given an arbitrary $\bs X \in \cl G$, and for $\bs X' \in \cl G_\epsilon$ such that $\|\bs X - \bs X'\|\leq \epsilon$, we find
$$
|\scp{\bs Q}{\bs X}| \leq |\scp{\bs Q}{\bs X'}| + |\scp{\bs Q}{\bs Y}| \|\bs X - \bs X'\| \leq \delta + |\scp{\bs Q}{\bs Y}| \epsilon,
$$
with $\bs Y = (\bs X - \bs X')/\|\bs X - \bs X'\|$. Since the symmetric matrix $\bs Y$ has rank two, we can decompose it as $\bs Y = \bs Y_1 + \bs Y_2$ with $\hat{\bs Y}_i := \bs Y_i / \|\bs Y_i\| \in \cl G$ and $\scp{\bs Y_1}{\bs Y_2} = 0$. Therefore, $|\scp{\bs Q}{\bs Y}| \leq |\scp{\bs Q}{\hat{\bs Y}_1}| \|\bs Y_1\| + |\scp{\bs Q}{\hat{\bs Y}_2}| \|\bs Y_2\| \leq 2 \rho$ and     
$|\scp{\bs Q}{\bs X}| \leq \delta + 2 \rho \epsilon$. Taking the suppremum over $\bs X$, this means that $\rho \leq \delta + 2 \rho \epsilon$, or $\rho \leq \delta/(1-2\epsilon)$. Picking for instance $\epsilon = \delta /4$ and $\delta < 1$, gives $|\scp{\bs Q}{\bs X}| \leq \delta + \delta^2 \leq 2 \delta$, and a final rescaling of $\delta$ concludes the proof.     
\end{proof}

The following corollary extends Prop.~\ref{eq:drop-spe} to the case where $\bs u$ belongs to a finite set of unit vectors.
\begin{corollary}
\label{cor:drop-spe-set}
Given a set of $S$ fixed unit vector $\{\bs u_s\}_{s=1}^S$, $\kappa = \pi/4$, and a distortion $0<\delta <1$, provided that $m \geq C \delta^{-2} (k \log(\frac{n}{k\delta}) + \log S)$, we have, with probability exceeding $1 - C \exp(-c \delta ^2 m)$, for all $k$-sparse signals $\bs x \in \Sigma_k := \{\bs v \in \bb R^n: |{\rm supp}(\bs v)|\leq k\}$ and all $s\in [S]$,
\begin{equation}
\label{eq:drop-spe-set}
\ts \Big|\frac{\kappa}{m} \scp{\sign(\drop(\bs u_s))}{\drop(\bs x)} - \scp{\bs u_s}{\bs x}^2 \Big| \leq \delta \|\bs x\|^2.    
\end{equation}    
\end{corollary}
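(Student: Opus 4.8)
The plan is to obtain Corollary~\ref{cor:drop-spe-set} from Proposition~\ref{prop:drop-spe} by enlarging the union bound already present in the latter's proof so that it ranges jointly over the $S$ fixed vectors $\{\bs u_s\}_{s=1}^S$ as well as over the covering of the lifted sparse set. The per-vector, per-signal concentration and the deterministic net-to-full geometry are left untouched; only the probability budget and the resulting sample complexity change.

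First I would recall the two ingredients established inside the proof of Proposition~\ref{prop:drop-spe}. For a single fixed unit vector $\bs u$ and a single fixed unit $k$-sparse signal $\bs x$, the quantity $\frac{\kappa}{m}\scp{\sign(\drop(\bs u))}{\drop(\bs x)}$ is an average of i.i.d. sub-exponential terms with mean $\scp{\bs u}{\bs x}^2$, which yields the pointwise tail bound~\eqref{eq:conc-pre-spe}. Second, the deterministic step that writes the deviation as $\scp{\bs Q_s}{\bs X}$ with $\bs Q_s := \frac{\kappa}{m}\sum_{i} \bar{u}_{s,i}\, \bs d_i \bs s_i^\top - \bs u_s \bs u_s^\top$ (where $\bar{\bs u}_s := \sign(\drop(\bs u_s))$ and $\bs X = \bs x \bs x^\top$), and controls $\rho_s := \sup_{\bs X\in\cl G}|\scp{\bs Q_s}{\bs X}|$ via the rank-two decomposition of differences of lifted matrices, depends only on the single index $s$ and therefore applies verbatim, and separately, for each $s \in [S]$.

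The key step is the combined union bound. I would fix the same $\epsilon$-covering $\cl G_\epsilon$ of $\cl G = \{\bs x \bs x^\top : \bs x \in \Sigma_k \cap \bb S^{n-1}\}$, of cardinality $|\cl G_\epsilon| \leq (cn/(\epsilon k))^{2k+1}$, and apply~\eqref{eq:conc-pre-spe} to the $S\cdot|\cl G_\epsilon|$ events indexed by pairs $(s,\bs X')\in[S]\times\cl G_\epsilon$. This shows that, with probability at least $1 - 2S|\cl G_\epsilon|\exp(-c\min(\delta^2,\delta)m)$, one has $|\scp{\bs Q_s}{\bs X'}| = |\frac{\kappa}{m}\scp{\bar{\bs u}_s}{\bs d_i^\top \bs X'\bs s_i} - \bs u_s^\top \bs X'\bs u_s| \leq \delta$ for all $\bs X'\in\cl G_\epsilon$ and all $s\in[S]$ simultaneously. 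Taking logarithms, the failure probability remains of the form $C\exp(-c\min(\delta^2,\delta)m)$ provided $\log S + \log|\cl G_\epsilon| \lesssim \min(\delta^2,\delta)m$; since $\log|\cl G_\epsilon|\lesssim k\log(n/(\epsilon k))$ and $\delta<1$ forces $\min(\delta^2,\delta)=\delta^2$, the choice $\epsilon \asymp \delta$ produces exactly the stated complexity $m \geq C\delta^{-2}(k\log(\frac{n}{k\delta})+\log S)$.

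On this event, the net-to-full argument of Proposition~\ref{prop:drop-spe} runs independently for each $s$: with $\rho_s := \sup_{\bs X\in\cl G}|\scp{\bs Q_s}{\bs X}|$ one gets $\rho_s \leq \delta + 2\rho_s\epsilon$, hence $\rho_s \leq \delta/(1-2\epsilon) \leq 2\delta$ for $\epsilon = \delta/4$, and a final rescaling of $\delta$ gives~\eqref{eq:drop-spe-set} uniformly over $\bs x\in\Sigma_k$ and $s\in[S]$. The only genuinely new point relative to the proposition, and the place to be careful, is absorbing the extra factor $S$ in the union bound without degrading the exponential tail: this is precisely what the additional $\delta^{-2}\log S$ term in the sample complexity buys. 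I expect this to be the mild obstacle here, rather than any new concentration estimate or covering argument, both of which carry over unchanged.
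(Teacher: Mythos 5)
Your proof is correct, but it is more elaborate than the paper's. The paper's entire argument is a black-box invocation of Proposition~\ref{prop:drop-spe}: for each fixed $\bs u_s$ the proposition already gives the uniform-in-$\bs x$ statement with failure probability at most $C\exp(-c\delta^2 m)$, so a union bound over $s\in[S]$ alone bounds the total failure probability by $CS\exp(-c\delta^2 m)=C\exp(-c\delta^2 m+\log S)$, and the extra $\delta^{-2}\log S$ term in the sample complexity absorbs the factor $\log S$ into the exponent. You instead re-open the proof of the proposition and perform the union bound at the finer level of pairs $(s,\bs X')\in[S]\times\cl G_\epsilon$, then rerun the deterministic net-to-full argument separately for each $s$. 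This is valid and lands on the same complexity $m\geq C\delta^{-2}(k\log(\frac{n}{k\delta})+\log S)$ with the same tail, but it buys nothing here: since the quantifier ``for all $\bs x\in\Sigma_k$'' is already inside each application of Proposition~\ref{prop:drop-spe}, there is no need to track the covering net jointly with $s$ --- the only thing your version makes explicit is that a single net serves all $S$ vectors, which is automatic. Your instinct about where the difficulty lies is right (absorbing $S$ without degrading the exponential tail), and your handling of it is the same mechanism the paper uses; you simply applied it one level deeper than necessary.
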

The proof of this corollary consists in bounding the failure of~\eqref{eq:drop-spe-set} using Prop.~\ref{prop:drop-spe} and a simple union bound argument.

\end{document}